\title{Some Pairs Problems}
\author{Jeffrey D.~Ullman\thanks{Stanford University} \and
Jonathan R.~Ullman\thanks{Northeastern University}}
\date{\today}
\newcommand{\pr}[2]{\underset{#1}{\mathbb{P}}\left[ #2 \right]}
\newcommand{\ex}[2]{\underset{#1}{\mathbb{E}}\left[ #2 \right]}
\newcommand{\set}[1]{\left\{#1\right\}}
\newtheorem{theorem}{Theorem}[section]
\newtheorem{lemma}[theorem]{Lemma}
\newtheorem{claim}[theorem]{Claim}
\newtheorem{corollary}[theorem]{Corollary}
\newtheorem{example}[theorem]{Example}
\theoremstyle{definition}
\newtheorem{definition}[theorem]{Definition}
\begin{document}
\maketitle
\begin{abstract}
A common form of MapReduce application involves discovering relationships between certain pairs of inputs. Similarity joins serve as a good example of this type of problem, which we call a ``some-pairs'' problem. In the framework of \cite{ADSU}, algorithms are measured by the tradeoff between reducer size (maximum number of inputs a reducer can handle) and the replication rate (average number of reducers to which an input must be sent. There are two obvious approaches to solving some-pairs problems in general. We show that no general-purpose MapReduce algorithm can beat both of these two algorithms in the worst case. We then explore a recursive algorithm for solving some-pairs problems and heuristics for beating the lower bound on common instances of the some-pairs class of problems.
\end{abstract}

\section{Introduction}
\label{intro-sect}
In \cite{ADSU}, MapReduce \cite{deanGhemawat} algorithms were studied from the point of view of finding the tradeoff between the {\em reducer size} (maximum number of inputs that can be sent to any reducer) and the {\em replication rate} (average number of reducers to which an input is sent). In this model, a {\em problem} is a mapping between a set of inputs and a set of outputs. A MapReduce algorithm that solves a problem is called a {\em mapping schema}; it is an assignment of inputs to reducers so that no reducer gets more inputs than the reducer size allows, yet for every output there is at least one reducer that gets all the inputs associated with that output.

\subsection{Some-Pairs Problems}
\label{some-pairs-subsect}
A particular result of this type covered in \cite{Ull-XRDS} shows that for the {\em all-pairs} problem (one output is associated with each pair of inputs), the replication rate is at least the number of inputs divided by the reducer size. Moreover, this bound is essentially tight. Here, we are going to look at the family of problems where all outputs are associated with exactly two inputs, but not all pairs of inputs are associated with an output. This model was first studied for MapReduce by \cite{OR11} as the problem of theta-joins.

It turns out to be convenient to view this family of problems as "X-Y" problems in the sense of \cite{ADKSU}. Here, there are two sets of $n$ inputs each, called $X$ and $Y$, and each output is a pair of inputs, one chosen from $X$ and the other from $Y$. We call this family {\em Some-Pairs Problems}.

\begin{figure}[htb]
\centerline{\includegraphics[width=0.25\textwidth]{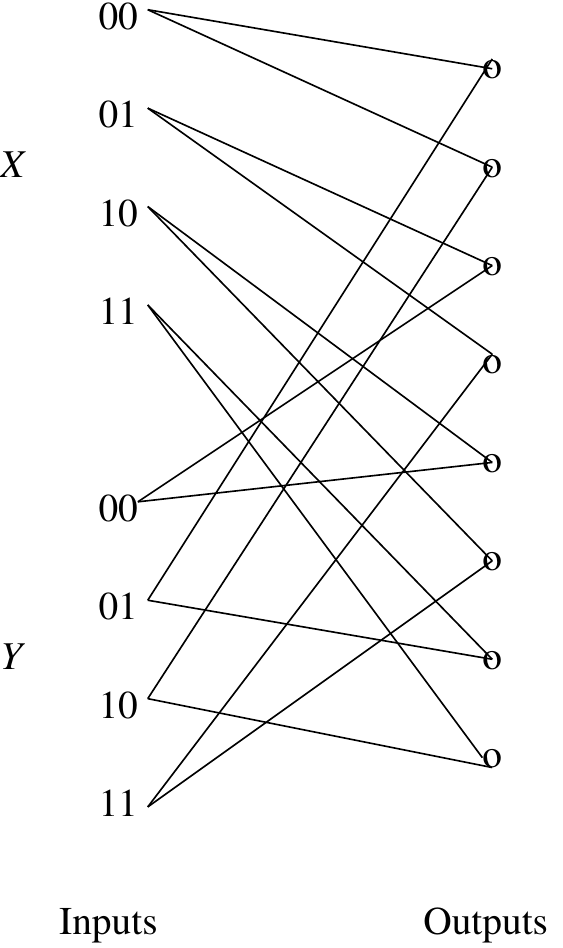}}
\caption{The Hamming-distance-1 problem as an input/output mapping}
\label{XYproblem-fig}
\end{figure}

\begin{example}
\label{hd-ex}
There are many examples of problems of the Some-Pairs type. A rich source of such problems is similarity joins or ``fuzzy'' joins, as discussed in \cite{FuzzyJoin} and \cite{AnchorPoints}. Perhaps the simplest of these problems is that of finding bit strings at Haming-distance 1. That is, the inputs are bit strings of some fixed length, and there is an output for each pair, one from $X$ and one from $Y$ that differ in exactly one bit.\footnote{One might imagine that if all these inputs are present in the input file(s), then the answer needs no calculation. Rather we know what pairs are at Hamming-distance 1 and can simply write them out with no calculation.

However, in the model of \cite{ADSU}, inputs are ``hypothetical,'' and not all may be present at any one time. If, say, only 10\% of the inputs are expected to be present, then the reducer size can be scaled up by a factor of 10, so that the expected number of actual inputs at any reducer matches the limt on that quantity.} Figure~\ref{XYproblem-fig} shows the representation of this problem for the very simple case of bit strings of length two. there are four strings in each of the sets $X$ and $Y$. Each member of $X$ is Hamming-distance 1 from two of the strings in $Y$, so there are eight outputs in all. For example, the topmost output in the figure represents 00 from $X$ and 01 from $Y$, which differ in their second bits.
\end{example}

\subsection{Some-Pairs Problems as Bipartite Graphs}
\label{some-pairs-bipartite-subsect}
We shall find it useful to represent some-pairs problems by a bipartite graph, which we call the {\em connection graph} for the problem. In these graphs, the two sets of nodes correspond to the input sets $X$ and $Y$. There is an edge between two nodes if and only if there is an output associated with that pair of nodes. For example, the Hamming-distance-1 problem from Example~\ref{hd-ex} can be represented by the graph in Fig.~\ref{XYgraph-fig}.

\begin{figure}[htb]
\centerline{\includegraphics[width=0.20\textwidth]{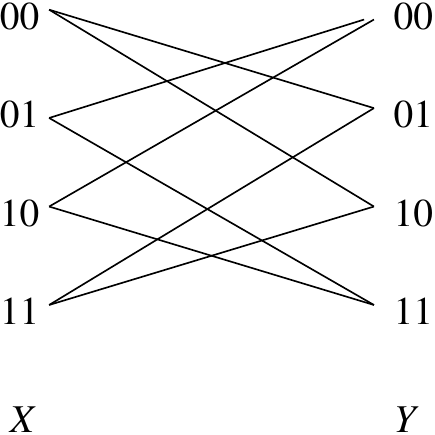}}
\caption{The Hamming-distance-1 problem as a bipartite ``connection'' graph}
\label{XYgraph-fig}
\end{figure}

\section{Upper Bounds on a Some-Pairs Problem}
\label{upper-sect}
The first observation is that we can treat a Some-Pairs problem as if it were an X-Y problem. In this paper, we are going to suppose that a reducer can hold $q$ inputs from the set $X$ and also hold $q$ inputs from set $Y$. Thus, it is possible for a reducer to handle $2q$ inputs, as long as these inputs are divided equally between the two input sets.  We say such a reducer is of {\em size} $q$.\footnote{In all other works using this model, the ``reducer size'' $q$ was used as the limit on the total number of inputs allowed at a reducer.}  In this manner, one reducer can hold $q$ edges of the connection graph. Further, we shall always use $n$ for the sizes of both sets of inputs $X$ and $Y$, and we shall use $m$ for the number of outputs the problem has (or equivalently, the number of edges in its connection graph).

There are two very simple algorithms that can handle the case where $m$ is significantly less than $n^2$. Moreover, these algorithms, together, are close to the best that can be done in general, as we shall show in Section~\ref{lower-sect}. These are:
Algorithm A: Ignore the fact that $m<n^2$ and use a simple algorithm that works when there is an output for each pair of inputs, one from $X$ and one from $Y$.
Algorithm B: Create one reducer (whose size can be $q=1$) for each edge.

\subsection{Algorithm A}
Following the technique of \cite{Ull-XRDS}, we can partition each of the sets $X$ and $Y$ into $n/q$ groups. There will be one reducer for each pair of groups, one from $X$ and one from $Y$. The replication rate is then $r = n/q$, since each member $x$ of $X$ has to be sent to the $n/q$ reducers that correspond to $x$'s group and one of the $n/q$ groups from $Y$. A similar observation holds for the members of $Y$, so the average replication, and in fact the exact replication for each input, is $n/q$. Thus, $n/q$ is surely an upper bound on the replication rate for any Some-Pairs problem, regardless of how large or small $m$ is.

\subsection{Algorithm B}
Create one reducer for each output. Send each input to the reducers for the outputs in which that input participates. Thus, the number of reducers to which any input is sent is its degree. The sum of the degrees of the inputs in $X$ is $m$, so the average degree is $m/n$. The same holds for the inputs in $Y$. Thus, the replication rate for Algorithm B is $m/n$.

Depending on the relative sizes of $n$, $m$, and $q$, either of Algorithms A and B could be better. Algorithm B is better when $n/q > m/n$, or $m<n^2/q$. That is, Algorithm B is preferred when $m$ is small compared with its maximum possible value of $n^2$. But how small is ``small'' depends on $q$. If $q$ is large, then $m$ has to be very small indeed, or else it is better to assume all edges (outputs) are part of the problem and use Algorithm A.

\section{A Lower Bound on General-Pur\-pose Algorithms}
\label{lower-sect}
It should be obvious that Algorithms A and B are ``general-purpose,'' in the sense that they make no use of the structure of the problem itself. There are numerous examples of problems where one can take advantage of a particular structure, such as the Hamming-distance-1 problem discussed in Example~\ref{hd-ex}, where one can construct a mapping schema with a much lower replication rate than that given by either algorithm.

However, we shall show in this section that essentially no improvements can be made in general. That is, for any $n$, $m$, and $q$ there must exist problems for which there are no algorithms that are significantly better than the better of Algorithms A and B.

\subsection{Complete Reducers}
\label{complete-subsect}
Our first step is to simplify the problem by showing we can restrict our thinking to mapping schemas where every reducer gets the maximum allowable number of inputs, $q$.  In what follows, we shall continue to use $n$ for the number of members of input sets $X$ and $Y$, $m$ for the number of outputs (or equivalently, edges in the bipartite graph between sets of nodes $X$ and $Y$), and $q$ for the number of inputs from $X$ and the number of inputs from $Y$ that a reducer can receive. That is, a reducer can receive up to $2q$ inputs, but we assume that at most $q$ come from one of the sets.
We make two assumptions of nontriviality:

\begin{enumerate}
\item
We shall assume that $q<n/2$. The purpose of this assumption is to exclude problems with solutions that require only one or two reducers.
\item
We also assume that each input is needed for at least one output. This assumption is a little tricky. We want to assume the two input sets $X$ and $Y$ have the same size $n$. But if some of the inputs from one set are not needed for any output, then we would have to delete them, making the sizes of the two sets unequal. However, we could add at most $n$ ``dummy'' outputs to the problem, which would add at most 1 to the replication rate, and would guarantee that each input was needed.
\end{enumerate}

An algorithm to solve the problem will, as normal, be represented by a mapping schema that assigns each input to a set of reducers. We shall use $p$ for the number of reducers used by an algorithm, and $r$ for the replication rate, or average number of reducers to which a given input is sent.

We say an algorithm or mapping schema is {\em complete} if every reducer is assigned exactly $q$ inputs from each of the two input sets $X$ and $Y$. Note that for complete mapping schemas, the equation $rn = pq$ holds. The first step in our lower-bound proof is to show that, to within a constant factor, the replication rate of the best complete mapping schema for a given problem is the same as that for any mapping schema for the same problem. As a result, we can restrict our analysis to complete mapping schemas.

\begin{lemma}
\label{complete-lemma}
For any some-pairs problem $P$ with $n$ inputs in each of $X$ and $Y$ and with $m$ outputs, and any reducer size $q<n/2$, the minimum replication rate for a complete mapping schema for $P$ is no more than 6 times the minimum replication rate for any mapping schema for $P$.
\end{lemma}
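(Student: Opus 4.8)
The plan is to start from an arbitrary mapping schema $M$ for $P$ of replication rate $r$ --- we may take $r=r^{*}$ to be the optimum, since replication rates are values of the form (number of input--reducer incidences)$/2n$ and hence form a discrete set in which a minimum is attained --- and to surgically convert $M$ into a complete schema whose replication rate is at most a constant times $r$. Write reducer $i$ of $M$ as a pair $(A_i,B_i)$ with $A_i\subseteq X$, $B_i\subseteq Y$, $|A_i|,|B_i|\le q$; it covers exactly those outputs with one endpoint in $A_i$ and the other in $B_i$. Two bookkeeping identities drive everything. First, $\sum_i\bigl(|A_i|+|B_i|\bigr)=2rn$, since the left side counts input--reducer incidences and there are $2n$ inputs each in $r$ reducers on average. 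Second, as noted in the excerpt, a complete schema with $p$ reducers has replication rate exactly $pq/n$. So it suffices to produce a complete schema that still covers all $m$ edges and uses at most $6rn/q$ reducers (in fact we will use fewer).

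The construction uses two moves: \emph{merging} reducers and \emph{padding} each reducer up to exactly $q$ inputs on each side. Padding is legal because $q<n$, and padding only enlarges the set of edges a reducer covers, so coverage is never lost; the only risk is that merging done badly, or an excess of reducers, inflates the count. The key is to merge carefully. First I would sort the reducers of $M$ in \emph{decreasing} order of $|A_i|+|B_i|$ and run a next-fit pass: maintain one open ``group,'' tracked by the union of the $A_i$'s and the union of the $B_i$'s it has absorbed so far; put the next reducer into the open group if this keeps both unions of size $\le q$, otherwise close the group and start a fresh one with that reducer (a single reducer always fits a fresh group since $|A_i|,|B_i|\le q$). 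Finally pad every group to exactly $q$ inputs per side, obtaining a complete schema. If an edge $(x,y)$ was covered by reducer $i$ of $M$ and $i$ landed in group $G$, then $x$ lies in $G$'s $X$-union and $y$ in $G$'s $Y$-union, so the padded group reducer still covers $(x,y)$; hence all $m$ edges remain covered.

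It remains to bound the number of groups, and this is the crux. Suppose a group $G$ is closed because the next reducer $j$ will not fit, say because $\bigl|\bigl(\bigcup_{i\in G}A_i\bigr)\cup A_j\bigr|>q$. Since reducers were processed largest-first, $|A_j|\le |A_j|+|B_j|\le |A_i|+|B_i|$ for every $i\in G$, so $|A_j|\le\sum_{i\in G}\bigl(|A_i|+|B_i|\bigr)=:c(G)$; also $\bigl|\bigcup_{i\in G}A_i\bigr|\le\sum_{i\in G}|A_i|\le c(G)$. Therefore $q<\bigl|\bigcup_{i\in G}A_i\bigr|+|A_j|\le 2c(G)$, i.e.\ every closed group has content $c(G)>q/2$; the case where the $Y$-union overflows is symmetric. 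Since the group contents sum to $\sum_i\bigl(|A_i|+|B_i|\bigr)=2rn$, there are fewer than $4rn/q$ closed groups, plus at most one still-open group, so the complete schema has at most $4rn/q+1$ reducers and replication rate at most $4r+q/n<5r\le 6r$, using $q/n<1/2<1\le r$ (the nontriviality hypotheses $q<n/2$ and ``every input is needed,'' the latter forcing $r\ge 1$). The main obstacle --- and the reason the decreasing-size sort is indispensable --- is precisely this last step: in an adversarial processing order one can be forced to close a group holding a single tiny reducer just as a nearly full reducer arrives, yielding $\Theta(m)$ groups and wrecking the bound; sorting guarantees the ``blocker'' $j$ is no larger than anything already in the group, which is exactly what pins $c(G)$ above $q/2$.
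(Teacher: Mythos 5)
Your proof is correct and follows essentially the same strategy as the paper's: merge under-full reducers together and then pad every reducer to exactly $q$ inputs per side, comparing input--reducer incidence counts before and after. Your sorted next-fit grouping with the per-group bound $c(G)>q/2$ is just a more careful bookkeeping of the paper's pairwise-combining step (the paper merges reducers that are below $q/2$ on both sides and uses $p\ge 3$ to get the factor $4p/(p-1)\le 6$), and it even yields a slightly better constant.
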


\begin{proof}
Consider some mapping schema $M_1$ that solves $P$. If $M_1$ has two reducers that each get no more than $q/2$ inputs from $X$ and no more than $q/2$ inputs from $Y$, then we can combine these two reducers into one. If we repeat this modification until no more changes are possible, we are left with a new mapping schema $M_2$ with the same replication rate (or less if we happen to combine two reducers whose input sets were not disjoint). $M_2$ has the property that at most one of its reducers has fewer than $q/2$ inputs.

Moreover, since $q<n/2$, and we assume every input is needed, there are at least 3 reducers. If there are $p\ge3$ reducers, then one might be almost empty, while the other $p-1$ each have at least $q/2$ of the possible $2q$ inputs. We can add inputs to any reducers that do not have $q$ inputs from $X$ and $q$ inputs from $Y$, and thus create a complete mapping schema $M_3$ that also solves the problem $P$ and has at most six times the replication rate of $M_2$ and $M_1$.
\end{proof}

\subsection{The Lower Bound on Replication Rate}
\label{lower-bound-subsect}
In the previous section we established that for every some-pairs problem, the replication rate is $r \lesssim \min\{m/n, n/q\}$.  In this section we shall prove a nearly match worst-case lower bound.  Specifically, we will prove that for every $n, m$, there exists a some-pairs problem that requires replication rate
$$
r \gtrsim \min\left\{ \frac{m}{n}, \frac{n}{q} \right\}.
$$
Let $X$ denote the set of inputs and $E$ denote the set of edges. The proof will rely on some-pairs problems (equivalently, graphs) that satisfy a quantitatively strong \emph{edge-isoperimetric inequality}. Roughly, an edge-isoperimetric inequality upper bounds the number of edges that are \emph{covered} by any subset of a given size.\footnote{Typically edge-isoperimetric inequalities are stated as lower bounds on the number of edges that ``leave'' any subset of a given size. For regular graphs (those in which edge node has the same number of neighbors) these two forms of the inequality are equivalent.}  Given two sets of inputs $S \subseteq X$ and $T\subseteq Y$, the set of \emph{edges covered by $S$ and $T$} is
$$
C(S,T) = \set{ (u,v) \in E \; | \; u \in S, v \in T }
$$

\begin{comment}We assume that the graph is \emph{regular}, which here means that $m/n$ is an integer and each input is incident to exactly $m/n$ edges. We will use $d = m/n$ to denote the degree of the graph.
\end{comment}

\begin{definition}[Expansion]
We say that a some-pairs problem is \emph{$(q, \phi)$-expanding} if for every sets of inputs $S \subseteq X$ and $T\subseteq Y$, each of size at most $q$, $|C(S,T)| \leq \phi$.
\end{definition}

Note that generally, $\phi$ will be a function of $q$.    This definition is very close to the standard definition of \emph{edge expansion} with two differences.  The first, and most substantive, difference is that we only consider the expansion of sets $S,T$ of size at most $q$.  The second is purely notational.  We count the number of edges \emph{in} $C(S,T)$ whereas typically expansion is defined as the number of edges with exactly one endpoint in $S$ or $T$.  We choose this notation to emphasize the fact that we are interested in graphs with ``nearly perfect expansion,'' meaning very few edges are covered.  See the survey of Hoory, Linial, and Wigderson~\cite{HooryLW06} for a textbook reference on expander graphs.

The following lemma shows that if a some-pairs problem is expanding, then it requires a large replication rate.

\begin{lemma}
\label{expand-lemma}
If $(X,E)$ is a some-pairs problem that is $(q, \phi)$-expanding, then any complete mapping schema using reducers of size $q$ must use at least $p \geq m/\phi$ reducers, and thus has a replication rate of $r \geq qm/\phi n$.
\end{lemma}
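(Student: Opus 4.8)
The plan is to argue by a simple counting/covering argument. Since the mapping schema is complete, every reducer receives exactly $q$ inputs from $X$ and exactly $q$ inputs from $Y$; call these sets $S_i \subseteq X$ and $T_i \subseteq Y$ for the $i$-th reducer, $i = 1, \dots, p$. For the schema to solve the problem $P$, every output of $P$ — equivalently, every edge $(u,v) \in E$ — must be ``resolved'' by some reducer, i.e.\ there must exist $i$ with $u \in S_i$ and $v \in T_i$. In the notation of the excerpt, this says $E \subseteq \bigcup_{i=1}^{p} C(S_i, T_i)$.

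Now I would apply the expansion hypothesis. Each $S_i$ and each $T_i$ has size exactly $q$ (hence ``at most $q$''), so the $(q,\phi)$-expansion property gives $|C(S_i,T_i)| \le \phi$ for every $i$. Combining this with the covering containment above,
$$
m = |E| \le \left| \bigcup_{i=1}^{p} C(S_i,T_i) \right| \le \sum_{i=1}^{p} |C(S_i,T_i)| \le p\phi,
$$
which rearranges to $p \ge m/\phi$. Finally, for a complete mapping schema the identity $rn = pq$ holds (as noted in Section~\ref{complete-subsect}), so $r = pq/n \ge qm/\phi n$, as claimed.

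There is essentially no hard step here — the lemma is a direct unpacking of the definitions of ``complete'', ``$(q,\phi)$-expanding'', and ``solves the problem,'' glued together by a union bound. The only points to state carefully are: (i) that a complete schema really does cover every edge (this is just the defining requirement of a mapping schema restricted to the some-pairs setting, where each output corresponds to a single edge), and (ii) that $S_i, T_i$ have size exactly $q \le q$ so the expansion bound applies verbatim. I would present it in three sentences plus the displayed inequality and the conversion via $rn = pq$.
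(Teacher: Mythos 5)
Your proposal is correct and follows essentially the same argument as the paper: every edge must be covered by some reducer, the $(q,\phi)$-expansion property bounds each reducer's coverage by $\phi$, so $p \ge m/\phi$, and the identity $r = pq/n$ converts this into the replication-rate bound. Your version merely makes the counting explicit via the union bound, which the paper states in prose.
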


\begin{proof}
In any complete mapping schema, for every edge $(u,v) \in E$, there must exist some reducer that is assigned both endpoints $u$ and $v$. That is, some reducer must cover the edge $(u,v)$. But if this some-pairs problem is $(q, \phi)$-expanding then any reducer of size at most $q$ can cover at most $\phi$ such edges. Since there are $m$ edges in total that must be covered, there must be at least $p \geq m/\phi$ reducers. The lower bound on the replication rate follows from the identity $r = pq/n$.
\end{proof}

\begin{example}
Consider the some-pairs problem of finding all strings of Hamming distance $1$. Here $X = \{0,1\}^{\log_2(n)}$ and $E$ consists of all pairs of strings with Hamming distance $1$. Note $|E| = n \log_2(n)$. One can show that for any set of size $q,$ the number of covered edges is at most $q\log_2(2q)$ (see \cite{ADSU}, e.g., but note that $q$ here is $2q$ there). Thus, this some-pairs problem is $(q, q \log_2(2q))$-expanding for every $q$.  By Lemma~\ref{expand-lemma}, this problem requires replication rate
$$r \geq qm/\phi n = \log_2(n) / \log_2(2q)$$
\end{example}

Now our goal is to prove a worst-case lower bound on the replication rate, as a function of $m$, $n$, and $q$ by considering some-pairs problems with the best possible expansion (the lowest possible values of $\phi$).  For all standard definitions of expansion, random graphs are known to achieve effectively the best possible parameters (see e.g.~\cite{Pinsker73, Bassalygo81} for classical results in the area).  Thus, for our lower bound we will consider a random some-pairs problem with $n$ inputs on each side and $m$ edges, and analyze its expansion under our definition.  A useful tool for analyzing random graphs is the \emph{Chernoff Bound}, which we state in simplified form below.

\begin{lemma}[Chernoff bound]
Let $A_1,A_2,\dots,A_m$ be a set of independent random variables that take values in $\{0,1\}$. Let $A = \sum_{j=1}^{m} A_{j}$ and let $\mu = \ex{}{A} = \sum_{j=1}^{m} \ex{}{A_{j}}$. Then we have the following bounds on the probability that $A$ takes large values.

\begin{enumerate}
\item For $\delta \geq 2e - 1$,
$
\pr{}{ A > (1+\delta) \mu } \leq 2^{-\delta \mu}
$
\item For $\delta \leq 2e - 1$,
$
\pr{}{ A > (1+\delta) \mu } \leq e^{-\delta^2 \mu / 4}
$
\end{enumerate}
\end{lemma}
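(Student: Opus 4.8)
The plan is to follow the standard moment-generating-function argument, specializing the bookkeeping to the two stated regimes of $\delta$ (and throughout I take $\delta \geq 0$). First I would apply Markov's inequality to the nonnegative random variable $e^{tA}$: for any $t > 0$,
$$
\pr{}{A > (1+\delta)\mu} = \pr{}{e^{tA} > e^{t(1+\delta)\mu}} \leq e^{-t(1+\delta)\mu}\,\ex{}{e^{tA}}.
$$
Next, using independence of the $A_j$ and writing $p_j = \ex{}{A_j}$, the moment generating function factorizes and is bounded via $1 + x \leq e^x$:
$$
\ex{}{e^{tA}} = \prod_{j=1}^m \ex{}{e^{tA_j}} = \prod_{j=1}^m \paren{1 + p_j(e^t - 1)} \leq \prod_{j=1}^m e^{p_j(e^t-1)} = e^{\mu(e^t - 1)},
$$
where the last step uses $\sum_j p_j = \mu$. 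Combining the two displays gives, for every $t > 0$, the master bound $\pr{}{A > (1+\delta)\mu} \leq \exp\paren{\mu(e^t - 1 - t(1+\delta))}$.

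The remaining work is to choose $t$ and simplify in each regime. The natural choice is $t = \ln(1+\delta)$, which after a one-line substitution yields the clean estimate $\pr{}{A > (1+\delta)\mu} \leq \paren{e^\delta/(1+\delta)^{1+\delta}}^\mu$. For the first regime, $\delta \geq 2e - 1$ is exactly $1 + \delta \geq 2e$, so $e^\delta/(1+\delta)^{1+\delta} \leq (e/(1+\delta))^{1+\delta} \leq 2^{-(1+\delta)} \leq 2^{-\delta}$, and raising to the power $\mu$ gives statement~1. For the second regime, $\delta \leq 2e - 1$, I would instead upper bound $e^t - 1 - t(1+\delta)$ by a second-order Taylor estimate about $t = 0$ — equivalently, invoke the elementary inequality $(1+\delta)\ln(1+\delta) - \delta \geq \delta^2/4$ on the relevant range of $\delta$ — which turns the exponent into $-\delta^2\mu/4$ and yields statement~2.

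The main obstacle is purely the constant-chasing in the second regime: one must verify that the chosen one-variable surrogate inequality (such as $(1+\delta)\ln(1+\delta) - \delta \geq \delta^2/4$, or a slightly weaker clean version if needed) actually holds throughout the interval $0 \leq \delta \leq 2e-1$, which reduces to checking the sign of a single function via differentiation together with a boundary check at $\delta = 2e - 1$. Nothing here is deep, but this is the only place where the specific thresholds $2e-1$ and the denominator $4$ matter; everything upstream is the textbook computation. As an alternative, since both displayed inequalities are weakenings of the classical Chernoff bound, one could simply cite a standard reference such as~\cite{HooryLW06} or a standard probability text and omit the derivation entirely.
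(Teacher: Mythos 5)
The paper never proves this lemma at all --- it is quoted as a standard tool --- so your instinct to either derive it or cite it is reasonable, and your moment-generating-function framework is the standard one. Your first regime is fine: with $t=\ln(1+\delta)$ you get $\paren{e^{\delta}/(1+\delta)^{1+\delta}}^{\mu}$, and $1+\delta\ge 2e$ gives $\paren{e/(1+\delta)}^{(1+\delta)\mu}\le 2^{-(1+\delta)\mu}\le 2^{-\delta\mu}$.

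The genuine gap is in the second regime. The surrogate inequality you propose, $(1+\delta)\ln(1+\delta)-\delta\ge\delta^{2}/4$, does \emph{not} hold throughout $0\le\delta\le 2e-1$: it fails for $\delta$ roughly in the interval $(4.1,\,2e-1]$. At the endpoint $\delta=2e-1$ you have $(1+\delta)\ln(1+\delta)-\delta=2e(1+\ln 2)-(2e-1)\approx 4.77$ while $\delta^{2}/4=(2e-1)^{2}/4\approx 4.92$, so the boundary check you defer to actually comes out with the wrong sign. Nor can this be repaired by a better choice of $t$ or a weaker-looking surrogate: $t=\ln(1+\delta)$ already minimizes the exponent $\mu(e^{t}-1)-t(1+\delta)\mu$, and the resulting rate $(1+\delta)\ln(1+\delta)-\delta$ is asymptotically the true large-deviation rate for Poisson-like sums (take $m$ Bernoulli$(\mu/m)$ variables with $m\gg\mu$; Stirling gives tail $e^{-\mu[(1+\delta)\ln(1+\delta)-\delta]}$ up to polynomial factors), so statement~2 as written is in fact false for $\delta$ near $2e-1$ once $\mu$ is large. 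The saving grace is that the paper only ever invokes statement~2 with $\delta\le\sqrt{3}$ (Case~2 of the lower bound), and on that range your argument goes through: setting $g(\delta)=(1+\delta)\ln(1+\delta)-\delta-\delta^{2}/4$, one has $g(0)=g'(0)=0$, $g'(\delta)=\ln(1+\delta)-\delta/2>0$ up to $\delta\approx 2.5$, and still $g(4)=5\ln 5-8>0$, so the inequality holds for all $0\le\delta\le 4$. So the concrete fix is to prove (or cite) the bound on a correct range --- e.g.\ $e^{-\delta^{2}\mu/4}$ for $\delta\le 4$, $e^{-\delta^{2}\mu/3}$ for $\delta\le 1$, or the uniformly valid $e^{-\delta^{2}\mu/(2+\delta)}$ --- rather than asserting the $\delta^{2}/4$ exponent all the way up to $2e-1$; citing a standard reference does not rescue the statement in exactly this form.
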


We shall use the Chernoff bound to show the following.

\begin{claim}
Let $X$ be a set of $n$ inputs and $E$ be a set of $m$ edges chosen independently at random. Let $S$ be a subset of $X$ of size at most $q$ and $T$ be a subset of $Y$ of size at most $q$. Then

\begin{enumerate}
\item For $\delta \geq 2e - 1$
$
\pr{}{ |C(S,T)| > (1+\delta)(mq^2/n^2) } \leq 2^{-\delta \mu}
$
\item For $\delta \leq 2e - 1$
$
\pr{}{ |C(S,T)| > (1+\delta)(mq^2/n^2) } \leq e^{-\delta^2 \mu / 4}
$
\end{enumerate}
\end{claim}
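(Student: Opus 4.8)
The plan is to apply the Chernoff bound directly to the random variable $|C(S,T)|$ after verifying that it satisfies the hypotheses of the bound. First I would fix the sets $S \subseteq X$ and $T \subseteq Y$, each of size at most $q$ (we may as well assume exactly $q$, since shrinking the sets only decreases $|C(S,T)|$). The key observation is that for each of the $m$ edges, the edge is chosen independently at random among the $n^2$ possible pairs $(u,v) \in X \times Y$, so the probability that a given edge lands inside $S \times T$ is exactly $|S|\cdot|T|/n^2 = q^2/n^2$. Let $A_j$ be the indicator that the $j$-th edge is covered by $S$ and $T$; these are independent Bernoulli random variables, and $A = \sum_{j=1}^m A_j = |C(S,T)|$. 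Hence $\mu = \ex{}{A} = m q^2 / n^2$, which is precisely the quantity appearing inside $(1+\delta)(\cdot)$ in the claim.

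With that identification in hand, the claim is just the Chernoff bound applied verbatim: substituting $A = |C(S,T)|$ and $\mu = mq^2/n^2$ into parts (1) and (2) of the Chernoff bound lemma gives exactly the two stated inequalities. So the body of the proof is essentially one paragraph: define the indicators, note independence, compute $\mu$, invoke the lemma.

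The only subtlety — and the one step I would be careful about — is the precise sampling model for ``$E$ a set of $m$ edges chosen independently at random.'' If edges are drawn i.i.d.\ with replacement from the $n^2$ pairs, then the indicators $A_j$ are genuinely independent and the argument goes through cleanly; the resulting (multi)graph may have repeated edges, but that only helps the lower bound since it means the ``real'' number of distinct edges covered is at most $|C(S,T)|$, and conversely the total edge count $m$ is an overcount, which we can absorb. If instead one insists on $m$ distinct edges sampled without replacement, the $A_j$ are negatively associated rather than independent, but the Chernoff upper-tail bound still applies to negatively associated variables, so the same inequalities hold; I would add a one-line remark to this effect rather than belabor it. I would also note in passing that $\mu = mq^2/n^2$ may be less than $1$ in the regime of interest, which is fine — the Chernoff bound holds for all $\mu > 0$ — and is in fact exactly why the expansion parameter $\phi$ will end up being governed by the $\delta \ge 2e-1$ (large-deviation) branch when we later take a union bound over all $\binom{n}{q}^2$ choices of $(S,T)$.
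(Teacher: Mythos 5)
Your proof is correct and follows essentially the same route as the paper's: define the indicator $A_j$ for the event that the $j$-th edge lands in $S\times T$, note independence, bound $\mu \le mq^2/n^2$ (the paper uses $|S|,|T|\le q$ and an inequality where you normalize to exactly $q$), and invoke the stated Chernoff bound. Your added remarks on the sampling model (with versus without replacement, negative association) address a point the paper leaves implicit but do not change the argument.
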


\begin{proof}
Let $A_{j}$ be a random variable that takes the value $1$ if the $j$-th edge is covered by $(S,T)$ and the value $0$ otherwise. Since the edges are chosen independently, $A_1,\dots,A_m$ are independent. Since $|S|, |T| \leq q$, $\ex{}{A_j} \leq q^2/n^2$. Moreover, $|C(S,T)| = A = \sum_{j=1}^{m} A_{j}$ and $\mu = \ex{}{A} = \sum_{j=1}^{m} \ex{}{A_j} \leq m(q^2 /n^2)$. The claim is now immediate from the Chernoff bound.
\end{proof}

The next step is to put a bound on the probability that \emph{any} pair of sets $(S,T)$ of size at most $q$ covers a large number of edges. The next claims follows from the fact that there are at most $\binom{n}{q}^2 \leq (en/q)^{2q}$ such sets.

\begin{claim}
Let $X$ be a set of $n$ inputs and $E$ be a set of $m$ edges chosen independently at random. Then

\begin{enumerate}
\item For $\delta \geq 2e - 1$
$$
\pr{}{ \max_{S \subseteq X, T \subseteq Y\; |S|,|T| = q} |C(S,T)| > (1+\delta)(mq^2/n^2) } \leq 2^{2q \log_2(en/q) - \delta \mu}
$$
\item For $\delta \leq 2e - 1$
$$
\pr{}{\max_{S \subseteq X, T \subseteq Y\; |S|,|T| = q} |C(S,T)| > (1+\delta)(mq^2/n^2) } \leq e^{2q \log_e(en/q) - \delta^2 \mu / 4}
$$
\end{enumerate}
\end{claim}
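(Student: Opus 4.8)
The plan is to obtain the stated bounds by a straightforward union bound over all relevant pairs of sets, feeding in the per-pair tail estimate from the previous claim. First I would note that it suffices to take the maximum over pairs $(S,T)$ with $|S| = |T| = q$ \emph{exactly}: the set of covered edges $C(S,T)$ is monotone under inclusion, so $C(S',T') \subseteq C(S,T)$ whenever $S' \subseteq S$ and $T' \subseteq T$, and hence any pair of smaller sets is dominated by a pair of size-$q$ sets. (If $q > n$ the statement is vacuous, since every $C(S,T) \subseteq E$ and $mq^2/n^2 \geq m$ in that case.) Restricting to size exactly $q$ also pins down $\mu = mq^2/n^2$, which is the quantity appearing in the exponents.

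Next I would count the number of such pairs. There are $\binom{n}{q}$ choices for $S$ and independently $\binom{n}{q}$ choices for $T$, so there are at most $\binom{n}{q}^2$ pairs, and the standard estimate $\binom{n}{q} \leq (en/q)^q$ gives an upper bound of $(en/q)^{2q}$ on the number of pairs.

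Then I would apply the union bound. In the case $\delta \geq 2e-1$, the previous claim says each fixed pair $(S,T)$ has $|C(S,T)| > (1+\delta)(mq^2/n^2)$ with probability at most $2^{-\delta\mu}$, so the probability that \emph{some} pair does is at most $(en/q)^{2q} \cdot 2^{-\delta\mu}$; rewriting $(en/q)^{2q} = 2^{2q\log_2(en/q)}$ yields the claimed bound $2^{2q\log_2(en/q) - \delta\mu}$. The case $\delta \leq 2e-1$ is identical, using the per-pair bound $e^{-\delta^2\mu/4}$ and rewriting $(en/q)^{2q} = e^{2q\log_e(en/q)}$, which gives $e^{2q\log_e(en/q) - \delta^2\mu/4}$.

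Since every step is an elementary union bound, there is no genuine obstacle here; the only points that require a little care are the monotonicity reduction to sets of size exactly $q$ (so that the count is the clean $\binom{n}{q}^2$ rather than a sum over all smaller sizes) and keeping the base of the exponential consistent — $\log_2$ paired with the base-$2$ tail bound in the first case, $\log_e$ paired with the base-$e$ tail bound in the second.
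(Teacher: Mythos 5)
Your proposal is correct and matches the paper's argument: the paper also derives this claim by a union bound over the at most $\binom{n}{q}^2 \leq (en/q)^{2q}$ pairs of size-$q$ sets, combined with the per-pair Chernoff tail bounds of the preceding claim. Your added remarks on monotonicity of $C(S,T)$ and on keeping the bases of the exponentials consistent are fine refinements but do not change the route.
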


Now we can prove the lower bound on replication rate via case analysis.

\subsubsection*{Case 1: $q \leq n^{2}/m$.}
Recall $\mu = \ex{}{|C(S,T)|} = mq^2 / n^2$.
Set $\delta$ so that
$$
\delta = \frac{3 q \log_2(en/q)}{\mu} = \frac{3 n^2 \log_2(en/q)}{mq}
$$
Note that, since $q \leq n^{2}/m$, we have $\delta \geq 2e-1$ as long as $m \geq \kappa n$ for some absolute constant $\kappa > 0$ ($\kappa = 2^{3e}/e$ would suffice). Thus, we can apply the Chernoff bound to obtain
\begin{align*}
&\pr{}{ \max_{S \subseteq X, T \subseteq Y\; |S|,|T| = q} |C(S,T)| > 2\delta \mu} \leq 2^{2q \log_2(en/q) - \delta \mu} \\
\Longrightarrow{} &\pr{}{ \max_{S \subseteq X, T \subseteq Y\; |S|,|T| = q} |C(S,T)| > 6 q \log_2(en/q)} \leq 2^{2q \log_2(en/q) - 3q \log_2(en/q)} = 2^{- q \log_2(en/q)} < 1.
\end{align*}
Thus there exists a some-pairs problem that is $(q, \phi)$-expanding for $\phi \leq 4q \log_2(en/q)$. Using our lemma, in this case we have a replication rate of
$$
r \geq \frac{q m}{\phi n} \geq \frac{m}{n} \cdot \frac{1}{6 \log_2(en/q)}
$$
which nearly matches the upper bound of $r \leq m/n$.

\subsubsection*{Case 2: $q \geq n^{2} \log_e(en/q) /m$.}
Set $\delta$ so that
$$
\delta^2 = \frac{3 q \log_e(en/q)}{\mu} = \frac{3 n^2 \log_e(en/q)}{mq}
$$
In this case, since $q \geq n^{2} \log_e(en/q)/m$, we have $\delta \leq \sqrt{3} \leq 2e - 1$. Thus, we can apply the second form of the Chernoff bound to obtain
$$
\pr{}{ \max_{S \subseteq X, T \subseteq Y\; |S|,|T| = q} |C(S,T)| > (1+\delta) \mu} \leq e^{2q \log_e(en/q) - \delta^2 \mu}
$$
First, by our choice of $\delta$, we have $1+\delta \leq 3$. Also recall that $\mu = mq^2/n^2$. Substituting for $\delta$ and $\mu$ and simplifying, we have
$$
\pr{}{ \max_{S \subseteq X, T \subseteq Y\; |S|,|T| = q} |C(S,T)| > \frac{ 3 m q^2 }{ n^2 }} \leq e^{2q \log_e(en/q) - \delta^2 \mu}
$$
By our choice of $\delta^2$, we have
$$
\pr{}{ \max_{S \subseteq X, T \subseteq Y\; |S|,|T| = q} |C(S, T)| > \frac{ 3 m q^2 }{ n^2 }} \leq e^{- q \log_e(en/q)} < 1.
$$
Thus there exists a some-pairs problem that is $(q, \phi)$-expanding for $\phi \leq 3m q^2 / n^2$.  It follows that
$$
r \geq \frac{q m}{\phi n} \geq \frac{q m}{ (3m q^2 / n^2) n} = \frac{n}{3q}
$$
which nearly matches the upper bound of $r \leq n/q$.

When we put the two cases together, we have the following lower bound on replication rate:

\begin{theorem}
\label{lower-th}
For any mapping schema that uses complete reducers,
$$
r\ge \min\left(\frac{n}{3q}, \frac{m}{n} \cdot \frac{1}{6 \log_2(en/q)}\right)
$$
\end{theorem}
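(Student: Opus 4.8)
The plan is to assemble the theorem from the two regimes analyzed above. By Lemma~\ref{expand-lemma} it suffices, for each $(n,m,q)$ with $q<n/2$, to exhibit one some-pairs problem with $n$ inputs on each side and $m$ edges that is $(q,\phi)$-expanding for a $\phi$ small enough that $qm/\phi n$ meets the claimed bound; the only tool needed is the probabilistic construction of the preceding claims — $m$ edges drawn independently from the $n^2$ possible $X$-$Y$ pairs, together with the union bound over the at most $(en/q)^{2q}$ pairs $(S,T)$. So the proof reduces to choosing, in each range of $q$, a target $\phi$ and a Chernoff deviation $\delta$ with $(1+\delta)\mu=\phi$ and $\mu=mq^2/n^2$, and checking that the union bound comes out below $1$.

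I would split on whether $q\le n^2\ln(en/q)/m$. In that range (which contains all of Case~1) I take $\phi=6q\log_2(en/q)$; the case hypothesis forces $\delta=6n^2\log_2(en/q)/mq-1\ge 2e-1$, so the first form of the Chernoff bound applies, and the same hypothesis (converting $\ln$ to $\log_2$) gives $\delta\mu>2q\log_2(en/q)$, so $2^{\,2q\log_2(en/q)-\delta\mu}<1$ and a $(q,\,6q\log_2(en/q))$-expanding problem exists; Lemma~\ref{expand-lemma} then yields $r\ge qm/\phi n=(m/n)\cdot 1/(6\log_2(en/q))$. In the complementary range $q\ge n^2\ln(en/q)/m$ (Case~2) the choice $\delta^2=3n^2\ln(en/q)/mq$ keeps $\delta\le 2e-1$, the second form of the Chernoff bound gives a $(q,\,3mq^2/n^2)$-expanding problem, and Lemma~\ref{expand-lemma} gives $r\ge n/3q$. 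To conclude, it remains to observe that in the first range $n/3q$ is never the smaller of the two terms — since $q\le n^2\ln(en/q)/m$ gives $n/3q\ge m/(3n\ln(en/q))\ge (m/n)\cdot 1/(6\log_2(en/q))$ using $6\ge 3\ln 2$ — and that in the second range, if $(m/n)\cdot 1/(6\log_2(en/q))$ happens to be the smaller term, then the $n/3q$ we proved already exceeds it. In both ranges, therefore, $r$ is at least the stated minimum.

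The real subtlety is the intermediate band $n^2/m<q<n^2\ln(en/q)/m$, which the two $\delta$-choices written in the case analysis above do not reach; the main obstacle is to keep the constants from leaking an extra logarithmic factor there. The fix I propose is not to rely on the $\phi\le 4q\log_2(en/q)$ bound that Case~1 produces incidentally, but to \emph{declare} the threshold $\phi=6q\log_2(en/q)$ in advance and let the single hypothesis $q\le n^2\ln(en/q)/m$ — precisely the complement of Case~2 — certify both $\delta\ge 2e-1$ and the union-bound inequality at once, so that Case~1 and the intermediate band are handled by one argument. I would also note that the nontriviality assumptions already force $m\ge n$, hence $m/n\ge 1$ and the minimum is nonvacuous, and that the constant $\kappa$ invoked in Case~1 is not actually needed once $q<n/2$ is assumed.
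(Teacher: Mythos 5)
Your proposal is correct and runs on the same machinery as the paper---a random some-pairs problem, the Chernoff-plus-union-bound claims, and Lemma~\ref{expand-lemma}---but your case decomposition is genuinely different, and it actually repairs a hole in the paper's own argument. The paper analyzes only the ranges $q \leq n^{2}/m$ (Case 1) and $q \geq n^{2}\log_e(en/q)/m$ (Case 2), so the intermediate band $n^{2}/m < q < n^{2}\log_e(en/q)/m$ is covered by neither case even though the theorem is stated unconditionally. By declaring the target $\phi = 6q\log_2(en/q)$ in advance and splitting exactly at $q = n^{2}\log_e(en/q)/m$, you make the single hypothesis $q \leq n^{2}\log_e(en/q)/m$ do double duty: it gives $1+\delta = 6n^{2}\log_2(en/q)/mq \geq 6/\ln 2 > 2e$, so the first form of the Chernoff bound applies, and it gives $mq/n^{2} \leq \log_e(en/q)$, so $\delta\mu = 6q\log_2(en/q) - mq^{2}/n^{2} > 2q\log_2(en/q)$ and the union bound comes out below $1$; hence a $\bigl(q, 6q\log_2(en/q)\bigr)$-expanding problem exists on the entire complement of Case 2, and the two ranges together exhaust all $q$. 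Your side remarks are also right: nontriviality forces $m \geq n$, the constant $\kappa$ of Case 1 is not needed, and the paper's ``$\phi \leq 4q\log_2(en/q)$'' is a typo for $6q\log_2(en/q)$. The closing comparison of the two terms of the minimum is superfluous---once $r$ is at least one of the two terms it is at least their minimum. One caveat, which you inherit from the paper rather than introduce: in Case 2 the stated union-bound claim has exponent $2q\log_e(en/q) - \delta^{2}\mu/4$, and with $\delta^{2}\mu = 3q\log_e(en/q)$ that exponent is positive, so the probability is not shown to be below $1$; to be self-contained one must either invoke a Chernoff form without the $1/4$ or take, say, $\delta^{2} = 12n^{2}\log_e(en/q)/mq$ (still at most $(2e-1)^{2}$ in this range), which only degrades the constant in the $n/3q$ term. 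Since this constant-level slip sits in the paper's own Case 2, it does not count against your route, but a complete write-up should patch it.
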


Further, when we put Theorem~\ref{lower-th} together with Lemma~\ref{complete-lemma} we have a similar lower bound for all mapping schemas.

\begin{corollary}
\label{lower-corr}
For any mapping schema whatsoever,
$$
r\ge \min\left(\frac{n}{18q}, \frac{m}{n} \cdot \frac{1}{6 \log_2(en/6q)}\right)
$$
\end{corollary}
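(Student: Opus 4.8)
The plan is to derive Corollary~\ref{lower-corr} as an essentially black-box composition of the two facts just established: Lemma~\ref{complete-lemma}, which bounds the cost of restricting to complete mapping schemas by a factor of $6$, and Theorem~\ref{lower-th}, the lower bound for complete mapping schemas. Fix $n$ and $m$, and let $P$ be one of the random some-pairs problems witnessing Theorem~\ref{lower-th} for the given reducer size $q$. Let $M$ be an \emph{arbitrary} mapping schema for $P$ using reducers of size $q$, and let $r$ be its replication rate; the goal is to lower bound $r$.

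First I would apply Lemma~\ref{complete-lemma} to $M$. This produces a \emph{complete} mapping schema $M'$ for the same problem $P$, still using reducers of size $q$, whose replication rate $r'$ satisfies $r' \le 6r$. (Nothing new is proved here: the merge-then-pad construction in the proof of Lemma~\ref{complete-lemma} changes neither the problem being solved nor the reducer-size bound.) Next I would feed $M'$ into Theorem~\ref{lower-th}, which applies precisely because $M'$ is complete, to get
$$
r' \;\ge\; \min\!\left(\frac{n}{3q},\; \frac{m}{n}\cdot\frac{1}{6\log_2(en/q)}\right).
$$
Combining this with $r \ge r'/6$ yields a bound of the form $r \ge \min\!\big(\tfrac{n}{18q},\, \tfrac{m}{n}\cdot\tfrac{c}{\log_2(en/q)}\big)$ for an absolute constant $c$; the statement of the corollary simply records a particular choice of constant together with the cosmetic replacement of $\log_2(en/q)$ by $\log_2(en/6q)$, which differ only by the additive constant $\log_2 6$ that the implied constants absorb (and which is lower-order anyway, since the bound is meaningful only when $n/q$ is large).

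I do not expect a genuine obstacle, since all the real work lives in Lemma~\ref{complete-lemma} and Theorem~\ref{lower-th}; the only care needed is bookkeeping. One must charge the factor $6$ from Lemma~\ref{complete-lemma} in the right place --- it turns the $3q$ into $18q$ and is absorbed into the logarithmic term --- and check that the standing nontriviality assumptions survive the composition, chiefly $q < n/2$, which Lemma~\ref{complete-lemma} requires. A second, minor point is that the two cases of Theorem~\ref{lower-th}, $q \le n^{2}/m$ and $q \ge n^{2}\log_e(en/q)/m$, leave a narrow band of $q$ uncovered; in that band the two terms of the minimum lie within a logarithmic factor of each other, so appealing to the nearest covered case costs only a constant (or a single logarithmic factor), and the form of the corollary is unaffected.
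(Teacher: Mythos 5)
Your approach is exactly the paper's: the paper offers no more proof of Corollary~\ref{lower-corr} than the sentence ``put Theorem~\ref{lower-th} together with Lemma~\ref{complete-lemma},'' and your composition (arbitrary schema of rate $r$ $\Rightarrow$ complete schema of rate at most $6r$ $\Rightarrow$ apply the theorem) is the intended one, with the lemma applied in the correct direction (minimum-to-minimum). The one point I would not let you wave away is the constant bookkeeping in the second term. Your derivation gives
$$
r \;\ge\; \tfrac{1}{6}\min\!\left(\frac{n}{3q},\; \frac{m}{n}\cdot\frac{1}{6\log_2(en/q)}\right) \;=\; \min\!\left(\frac{n}{18q},\; \frac{m}{n}\cdot\frac{1}{36\log_2(en/q)}\right),
$$
whereas the corollary asserts $\frac{m}{n}\cdot\frac{1}{6\log_2(en/6q)}$, which is strictly \emph{larger} (stronger) than $\frac{m}{n}\cdot\frac{1}{36\log_2(en/q)}$ for all relevant $n/q$. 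So the difference is not a ``cosmetic replacement absorbed by implied constants'' --- the corollary has explicit constants, and its form is precisely Theorem~\ref{lower-th} with $q$ replaced by $6q$, a substitution that Lemma~\ref{complete-lemma} (which keeps the reducer size at $q$ and inflates the replication by $6$) does not license. In other words, your argument proves the corollary only with $36\log_2(en/q)$ in place of $6\log_2(en/6q)$; to get the stated constants one would need a different reduction (e.g., converting an arbitrary schema of size $q$ into a complete schema of size $6q$ without increasing replication), which neither you nor the paper supplies. Since the paper is itself loose here (note also the mismatch between the $\phi\le 4q\log_2(en/q)$ claim and the $6$ used in Case~1), this is best read as a constant-factor slack in the source rather than a flaw in your plan, but you should state the weaker constants you actually obtain rather than claim the stated ones follow. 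Your side remark about the uncovered band of $q$ between the two cases of the theorem is a fair observation, though it concerns Theorem~\ref{lower-th} itself rather than this corollary.
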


\section{A General Algorithm}
\label{alg-c-sect}
While we cannot beat the better of Algorithms A and B in all cases, there is a recursive approach to decomposing some-pairs problems that is close to the better of these algorithms, and offers the opportunity in many real examples to do better than either of these algorithms.

\subsection{Algorithm C}
\label{alg-c-subsect}
Suppose we are given a problem with $n$ inputs (nodes) and $m$ outputs (edges) represented as before by a bipartite graph with two sets of nodes $X$ and $Y$. Let $q$ be the reducer size.

\smallskip

\noindent
{\small BASIS}: If $n\le q$ use one reducer that gets all the nodes from $X$ and $Y$. Or, if $m\le q$, use one reducer that receives the nodes from $X$ and $Y$ that are endpoints of one of the $m$ edges. Note there can be no more than $q$ of either.

\smallskip

\noindent
{\small INDUCTION}: Divide the inputs of $X$ into two equal-sized groups $X_1$ and $X_2$, of $n/2$ inputs each, and do the same for $Y$, dividing it into groups $Y_1$ and $Y_2$. We thus have four subproblems, each with two input sets of size $n/2$: $(X_1,Y_1)$, $(X_1,Y_2)$, $(X_2,Y_1)$, and $( X_2,Y_2)$. Solve each of these recursively.

\subsection{Analysis of Algorithm C}
\label{alg-c-anal-subsect}
We can prove that the replication rate for Algorithm C is at most $\sqrt{m/q}$. The proof is an induction on $n$, To that end, define $r_q(n,m)$ to be the maximum replication rate needed to solve any Some-Pairs problem with $n$ inputs in each set, $m$ outputs, and reducer size $q$, using Algorithm~C.

\begin{theorem}
\label{alg-c-th}
$r_q(n,m) \le \sqrt{m/q}$ for any $n = q2^i$, where $i\ge0$.
\end{theorem}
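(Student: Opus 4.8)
The plan is to prove the bound by induction on $i$, i.e.\ on the number $n = q2^i$ of inputs on each side, showing that for \emph{every} some-pairs problem $P$ with these parameters Algorithm~C outputs a mapping schema of replication rate at most $\sqrt{m/q}$; since $r_q(n,m)$ is the worst case over such $P$, this gives the theorem. Throughout I would track the total number of (input, reducer) incidences rather than the replication rate itself, since that is the additive quantity the recursion respects: a schema for a problem with $N$ inputs in all has exactly $rN$ incidences.

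For the basis, Algorithm~C uses one reducer, with two sub-cases. If $m \le q$, only the at most $2m$ endpoints of the $m$ edges are sent, giving at most $2m$ incidences; dividing by the $2n$ inputs gives replication rate at most $m/n \le \sqrt{m/q}$, using $m\le q$ and $q \le n$ (we always have $n = q2^i \ge q$, and likewise for every subproblem reached in the recursion, since $n$ is only ever halved, never below $q$). If instead $n \le q$ but $m > q$ (if $m \le q$ we are back in the previous sub-case), all $2n$ inputs go to one reducer for replication rate $1 \le \sqrt{m/q}$. So the basis obeys the bound in either sub-case.

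For the inductive step I would take $n = q2^i$ with $i\ge1$, assume the bound for $q2^{i-1}$, and consider the four subproblems $(X_a,Y_b)$, $a,b\in\{1,2\}$, each a some-pairs problem on $n/2 = q2^{i-1}$ inputs per side with $m_{ab}$ edges, so $\sum_{a,b} m_{ab} = m$. By the induction hypothesis the recursive call on $(X_a,Y_b)$ returns a schema of replication rate at most $\sqrt{m_{ab}/q}$, hence at most $n\sqrt{m_{ab}/q}$ incidences (each subproblem has $n$ inputs total). The four recursive schemas use disjoint reducer sets, and each input lies in exactly two of the four subproblems, so the incidences of the combined schema add up to at most $\tfrac{n}{\sqrt q}\sum_{a,b}\sqrt{m_{ab}}$. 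Dividing by the $2n$ inputs of the whole problem and applying Cauchy--Schwarz, $\sum_{a,b}\sqrt{m_{ab}} \le \sqrt4\,\sqrt{\sum_{a,b}m_{ab}} = 2\sqrt m$, which yields replication rate at most $\sqrt{m/q}$ and closes the induction.

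The step that needs the most care — and the reason the exponent $\tfrac12$ is exactly what survives — is this last one: the four-way split costs a factor $\sqrt4 = 2$ in Cauchy--Schwarz, which is cancelled exactly by the factor $2$ relating the size of a subproblem ($n$ inputs) to that of the whole ($2n$ inputs). More generally a balanced split into $t$ parts per side would give $t^2$ subproblems of $2n/t$ inputs each, and the Cauchy--Schwarz factor $\sqrt{t^2}=t$ would cancel the size ratio $t$ in the same way, so any balanced split works; the binary split is chosen only so the recursion stays inside the family $n = q2^i$. The other things to watch are to argue via total incidences rather than per-input replication (which is only an average), and to keep the two basis sub-cases separate so the relevant one of $m\le q$ or $m>q$ is available.
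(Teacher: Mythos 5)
Your proof is correct and follows essentially the same route as the paper's: induction on $n$ over the same four-way decomposition, with the key step being the inequality $\sum_{a,b}\sqrt{m_{ab}} \le 2\sqrt{m}$, which you obtain by Cauchy--Schwarz where the paper uses a concavity/Lagrangian argument, and your incidence-counting is just a cleaner bookkeeping of the paper's averaging over $X_1$, $X_2$, and $Y$. Your explicit treatment of the $m\le q$ basis case for any $n\ge q$ is a small extra care the paper elides, but the argument is the same in substance.
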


\begin{proof}
The proof is an induction on $n$.

\smallskip

\noindent
{\small BASIS}:
For the basis, let $n=q$. If $m\ge q$, then $\sqrt{m/q} \ge 1$. But since $n=q$ we can send all nodes to one reducer, giving a replication rate of 1, so the theorem holds.

If $m<q$, then we can send only the at most $m$ nodes from each of the two sets $X$ and $Y$ that participate in edges to one reducer. The replication rate is thus at most $m/q$. But since $m<q$, $m/q<\sqrt{m/q}$, again proving the theorem.  Note that in this case, the replication rate is actually a proper fraction, but that is fine, since most inputs do not participate in an edge, and therefore are never communicated to any reducer; i.e., their replication is 0.

\smallskip

\noindent
{\small INDUCTION}:
Suppose $n=q2^i$ for some $i>0$. Divide the set of nodes $X$ and $Y$ into two equal-sized sets, each of size $q2^{i-1}$, to make the four subproblems $(X_1,Y_2)$, $(X_2,Y_1)$, and $( X_2,Y_2)$ as described in Algorithm~C. The $m$ edges of the original connection graph will divide among those four subproblems in an unknown way. Suppose that $m_{ij}$ is the number of edges that belong to the subproblem $(X_i,Y_j)$. We know that $$m_{11}+m_{12}+m_{21}+m_{22} = m$$
By the inductive hypothesis, we know that $r_q(n/2,m_{ij}) \le \sqrt{m_{ij}/q}$ for $i,j \in \{1,2\}$.

Think first of the inputs in $X_1$. These must be replicated an average of $\sqrt{m_{11}/q}$ times for the subproblem that pairs $X_1$ with $Y_1$, and they must be paired an average of $\sqrt{m_{12}/q}$ times for the subproblem that pairs $X_1$ with $Y_2$.  Their average replication rate is thus at most $\sqrt{m_{11}/q} + \sqrt{m_{12}/q}$. Similarly, we can argue that the average replication rate for the inputs in $X_2$ is at most $\sqrt{m_{21}/q} + \sqrt{m_{22}/q}$. It therefore follows that the average replication rate for inputs in $X$ is at most the average of these two bounds, or $$\frac12\bigl(\sqrt{m_{11}/q}+\sqrt{m_{12}/q}+\sqrt{m_{21}/q}+\sqrt{m_{22}/q}\bigr)$$
The same analysis applies to the members of $Y$, so we can assert that the replication rate needed by this recursive approach to constructing a mapping schema is at most
$$r_q(n,m) \le \frac1{2\sqrt{q}}\bigl(\sqrt{m_{11}}+\sqrt{m_{12}}+\sqrt{m_{21}}+\sqrt{m_{22}}\bigr)$$

We next need to show that
$$\sqrt{m_{11}}+\sqrt{m_{12}}+\sqrt{m_{21}}+\sqrt{m_{22}} \le 2\sqrt{m}$$
Since the square-root function is concave, we know that the maximum occurs when all the $m_{ij}$'s are equal, that is, they are each $m/4$. However, in more detail, we can set up the Lagrangian with the constraint $m_{11}+m_{12}+m_{21}+m_{22} = m$ as:
$$\sqrt{m_{11}}+\sqrt{m_{12}}+\sqrt{m_{21}}+\sqrt{m_{22}} -\lambda(m_{11}+m_{12}+m_{21}+m_{22} - m)$$
and take the partial derivatives with respect to each of the $m_{ij}$'s, setting each to 0. We thus get $\frac12m_{ij}^{-1/2} = \lambda$
for all $i$ and $j$, or $m_{ij} = 1/4\lambda^2$. Since the sum of the four $m_{ij}$'s is $m$, it follows that $m = 1/\lambda^2$, and therefore $m_{ij} = m/4$ at the extremum. Finally, we note that the second derivatives at the extremum are negative, so the extremum is indeed a maximum. We therefore claim that an upper bound on $r_q(n,m)$ occurs when $m_{ij} = m/4$, and this upper bound is
$$r_q(n,m) \le \frac1{2\sqrt{q}}\bigl(\sqrt{m_{11}}+\sqrt{m_{12}}+\sqrt{m_{21}}+\sqrt{m_{22}}\bigr) \le$$
$$\frac1{2\sqrt{q}}4\sqrt{m/4} = \frac1{2\sqrt{q}}2\sqrt{m} = \sqrt{m/q}$$
\end{proof}

\begin{corollary}
\label{alg-c-corr}
$r_q(n,m) \le 2\sqrt{m/q}$ for any $n\ge q/2$.
\end{corollary}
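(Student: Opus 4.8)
The plan is to deduce the corollary from Theorem~\ref{alg-c-th} by \emph{padding} the input sets up to a power-of-two multiple of $q$ and then bounding how much this padding inflates the replication rate. Given an arbitrary some-pairs problem $P$ with $n\ge q/2$ inputs on each side and $m$ edges, I would set $i=\max\{0,\lceil\log_2(n/q)\rceil\}$ and $n'=q2^i$. By construction $n'\ge n$ and $n'$ has exactly the form $q2^i$ (with $i\ge0$) required by Theorem~\ref{alg-c-th}; moreover $n'\le 2n$, as one checks in the two cases $n\le q$ (where $n'=q$ and $n'/n=q/n\le 2$ since $n\ge q/2$) and $n>q$ (where $q2^{i-1}<n$, so $n'=q2^i<2n$).

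Next I would form the padded problem $P'$ by adding $n'-n$ ``dummy'' inputs with no incident edges to each of $X$ and $Y$. Then $P'$ is a legitimate some-pairs problem with $n'$ inputs on each side and exactly the same $m$ edges as $P$. Running Algorithm~C on $P'$ produces a mapping schema $M'$, and Theorem~\ref{alg-c-th} applies to give that the replication rate of $M'$ is at most $\sqrt{m/q}$; equivalently, the sum over the $n'$ inputs of $X$ of the number of reducers each is sent to is at most $n'\sqrt{m/q}$ (and likewise for $Y$).

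Finally I would read off a mapping schema for $P$ by restricting $M'$ to the $n$ original inputs on each side. This is still a valid schema for $P$: every edge of $P$ is an edge of $P'$, so some reducer of $M'$ already receives both of its endpoints, and discarding dummy inputs only shrinks reducers, so the size bound $q$ is preserved. Its replication rate on the $X$ side is the sum of incidences of the original $X$-inputs divided by $n$, which is at most the sum over \emph{all} $n'$ inputs of $X$ (the dummies contribute a nonnegative amount) divided by $n$, hence at most
$$
\frac{n'\sqrt{m/q}}{n}\ \le\ \frac{2n\sqrt{m/q}}{n}\ =\ 2\sqrt{m/q},
$$
and the same holds on the $Y$ side, giving $r_q(n,m)\le 2\sqrt{m/q}$.

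I do not expect any serious obstacle here; the whole argument is a direct invocation of Theorem~\ref{alg-c-th}. The one point that needs care is the bookkeeping in the last step: the replication rate of $P$ must be computed by averaging the incidences of the original inputs over $n$, not over $n'$, and it is precisely the mismatch between $n$ and $n'$ — controlled by $n'\le 2n$, which is where the hypothesis $n\ge q/2$ is used — that produces the factor of $2$.
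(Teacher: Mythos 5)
Your proposal is correct and is essentially the paper's own argument: pad each of $X$ and $Y$ with dummy inputs up to $n'=q2^i\le 2n$ (using $n\ge q/2$), apply Theorem~\ref{alg-c-th} to the padded problem, and observe that restricting back to the original $n$ inputs inflates the average replication by at most the factor $n'/n\le 2$. Your write-up just makes explicit the bookkeeping (the two cases $n\le q$ and $n>q$, and averaging over $n$ rather than $n'$) that the paper's brief proof leaves implicit.
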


\begin{proof}
If $n\ge q$, we can at most double the number of nodes in each of the sets $X$ and $Y$ and thereby make $n$ be $q$ times a power of 2. By Theorem~\ref{alg-c-th}, the replication rate for this new problem is at most $\sqrt{m/q}$. If we then remove the introduced nodes, we at most double the average replication rate, yielding the corollary.
\end{proof}

As predicted by Theorem~\ref{lower-th}, Algorithm C cannot beat both Algorithms A and B. In fact, note that $\sqrt{m/q}$ is the harmonic mean of the upper bounds $r\le m/n$ and $r\le n/q$ given by Algorithms A and B. That is, $\sqrt{m/q}$ equals $\sqrt{(m/n)(n/q)}$, and therefore cannot be less than both bounds.

\subsection{An Application of Algorithm C}
\label{alg-c-hd-subsect}
While Algorithm C is not useful in the worst case, it can yield a good solution for particular problems. Since the worst case occurs when every partition of the sets $X$ and $Y$ yields a uniform distribution of edges among the four subproblems, we should look for particular partitions that distribute the edges as unevenly as possible. In this section, we explore an example of how this strategy can succeed.

Our problem is a variant of the Hamming-distance-1 problem discussed in Example~\ref{hd-ex}. Let $X$ and $Y$ both be the set of bit strings of length $b$. In this variant, the outputs are pairs of strings $(x,y)$ such that $x$ is in $X$, $y$ is in $Y$, and $y$ is formed from $x$ by changing a single 0 to a 1. That is, we are looking for pairs of strings at Hamming distance 1, but only when the {\em weight} (number of 1's) of the string from $Y$ is one greater than the weight of the string from $X$. Since all bit strings of length $b$ are present in both $X$ and $Y$, we still get all pairs of strings at distance 1.

Our first partition divides both $X$ and $Y$ according to whether the weight of the string is odd or even. That is, let $X_1$ be the set of odd-weight strings in $X$ and $X_2$ be the even weight strings in $X$. Divide $Y$ similarly into $Y_1$ and $Y_2$. Notice that there are no edges connecting nodes in $X_1$ and $Y_1$, and no edges between $X_2$ and $Y_2$. Therefore, only two of the four subproblems, $(X_1,Y_2)$ and $(X_2,Y_1)$ need to be solved. Most importantly, the replication of each input in the full problem is exactly the same as its replication in the one subproblem in which it participates.

We can continue this division recursively. Consider $X_1$, the strings from $X$ of odd weight. These have weights that are either of the form $4i+1$ or $4i+3$, for some integer $i$.
Likewise, the strings in $Y_2$ have weights that are either of the form $4i$ or $4i+2$. If we divide these two sets according to the remainder when the weight is divided by 4, we again create four subproblems, but only two of them have any edges. That is, all edges either connect a string of weight $4i+1$ in $X_1$ to a string of weight $4i+2$ in $Y_2$, or they connect a string of weight $4i+3$ in $X_1$ to a string of weight $4(i+1)$ in $Y_2$. We again find that each input of the subproblem $(X_1,Y_2)$ only participates in one sub-subproblem, so its replication is whatever it is in that sub-subproblem. A similar statement holds for the subproblem $(X_2,Y_1)$. Note that the divisions of sets like $X_1$ into two parts, need not be an even division. However, we still have what we need: the division into subproblems does not create a need for replication of inputs at any level.

Eventually, we run into a limiting factor. After $\log_2b$ divisions of this type, subproblems involve sets of strings of a single weight. The weights in the middle, around $b/2$, have the most strings. The number of strings of weight $b/2$ is the largest, on the order of $2^b/\sqrt{b}$ strings. If $q\ge 2^b/\sqrt{b}$, then at this stage we can solve every subproblem at one reducer. As a result, the replication rate is only 1, as long as $q$ is close to $n=2^b$; precisely, as long as $q\ge 2n/\log n$. That is still significant, since in general we can only get a replication rate as low as 1 if $q=n$. Note that this algorithm is close to the one discussed in \cite{ADSU} that uses weights of strings for efficiency. The differences come from the different model used here (two input sets versus a single input set).

\subsection{Further Decomposition}
\label{further-subsect}
If $q$ is smaller than the size of the largest single-weight set of bit strings, we can find other good ways to partition into subproblems. Let us continue assuming $X$ and $Y$ are sets of bit strings of length $b$, and the problem is to find pairs $(x,y)$ from $X$ and $Y$ respectively, where $y$ is $x$ with a single 0 changed to 1. We shall assume $X$ and $Y$ are all bit strings of length $b$, although they could be only the strings of a single weight and represent a level-$b$ subproblem derived in Section~\ref{alg-c-hd-subsect}.

Another way to decompose $X$ and $Y$ is according to the first bit of the strings. That is, let $X_1$ be the strings in $X$ and begin with 1 and $X_2$ be the strings that begin with 0. Define $Y_1$ and $Y_2$ similarly. First, notice that no edges run between $X_1$ and $Y_0$, so one of the four subproblems can be dropped. Also, the subproblem $(X_2,Y_1)$ has very few edges. In particular, the only edges connect a string $0w$ from $X_2$ with the string $1w$ from $Y_1$. We can therefore use one reducer for each $w$; this reducer receives only the strings $0w$ from $X_2$ and $1w$ from $Y_1$. Its effect is to add 1, for each member of $X_2$ and $Y_1$, to whatever its replication count is from the remaining subproblems.

The other two subproblems are $(X_1,Y_1)$ and $(X_2,Y_2)$. Call these the {\em big} subproblems. All the strings of the first of these begin with 1, and all the strings of the second begin with 0. Therefore, each of these subproblems can be solved exactly as we would the original problem, but with strings of length $b-1$ instead of $b$. Therefore, if we use this decomposition recursively, the replication rate $r(b)$ for strings of length $b$ satisfies the recurrence
$$r(b) = r(b-1) +1/2$$
The justification for this recursion is that half the strings~-- those of $X_2$ and $Y_1$~-- require replication that is 1 for the subproblem $(X_2,Y_1)$ plus $r(b-1)$ for the big subproblem in which it participates. The other half of the strings participate in only one of the big subproblems, and therefore have replication $r(b-1)$.

There is a basis to this recurrence: when $2^b = q$. Then, we need a single reducer to receive all the strings of $X$ and $Y$. That is, $r(\log_2q) = 1$. The solution to the recurrence for reducers of size $q$ is therefore $r(b) = 1 + \frac12(b-\log q)$. Or, using $n=2^b$ as the number of inputs in $X$ and in $Y$, the replication rate is $1+ \frac12\log(n/q)$. This quantity is generally less than the upper bound from \cite{ADSU}, which is $\log n/\log q$, but which only applies when $n$ is at least $q^i$ for some integer $i\ge2$. As a result, this application of Algorithm~C can be superior when $n<q^2$.

\section{Conclusions}
\label{conclusion-sect}
We introduced the class of problems called ``some-pairs,'' for which each output is a function of two inputs, and we considered MapReduce algorithms suitable for solving all problems in this class.  We observed that there are two obvious algorithms for doing so, and depending on the relationship between the numbers of inputs and outputs for the problem, either one could require less communication than the other.  We then showed that, to within a log factor, no general-purpose algorithm can use less communication than the better of the two obvious algorithms.  Finally, we looked at a recursive approach to solving general problems in the some-pairs class that can beat the obvious algorithms for those problems that have an exploitable structure.

\bibliographystyle{abbrv}
\bibliography{semijoin-mapreduce}
\end{document}